\newtheorem{thm}{Theorem}
\newtheorem{proposition}{Proposition}
\newtheorem{remark}[thm]{Remark}
\newtheorem{definition}[thm]{Definition}
\begin{document}

\title{Interference Two-Way Relay Channel with Three End-nodes}

\author{
\IEEEauthorblockN{Erhan Y\i{}lmaz and Raymond Knopp and David Gesbert}
\IEEEauthorblockA{EURECOM, Sophia-Antipolis France\\
\{yilmaz, knopp, gesbert\}@eurecom.fr }}

\maketitle

\begin{abstract}
In this paper, we study a communication system consisting of three end-nodes, e.g. a single transceiver base station (BS), one transmitting and one receiving user equipments (UEs), and a common two-way relay node (RN) wherein the full-duplex BS transmits to the receiving UE in downlink direction and receives from the transmitting UE in uplink direction with the help of the intermediate full-duplex RN. We call this system model as interference two-way relay channel (ITWRC) with three end-nodes. Information theoretic bounds corresponding this system model are derived and analyzed so as to better understand the potentials of exploiting RN in future communication systems. Specifically, achievable rate regions corresponding to decode-and-forward (DF) relaying with and without rate splitting, and partial-DF and compress-and-forward (pDF+CF) relaying strategies are derived.
\end{abstract}

\section{Introduction \label{sec:intro}}

In this paper, we study the system model illustrated in Fig.~\ref{fig:TwoWayRelayInCellularSystems} wherein a common full-duplex base station (BS) communicates simultaneously with two distinct user equipments (UEs) in uplink (UL) and downlink (DL) directions with the help of an intermediate full-duplex relay node (RN). Due to its tangential relation to the interference relay channel (IRC) \cite{maric/debora_isit08} and the two-way relay channel (TWRC) \cite{rankov_asilomar05, knopp_izs06}, we call this channel model as interference two-way relay channel (ITWRC) with three end-nodes. However, this channel model has several distinct characteristics which differentiate it from the IRC and the TWRC.

In conventional TWRC, where two UEs exchange information via an intermediate RN \cite{rankov_asilomar05, knopp_izs06, wu/chou_ciss05, popovski_icc06, kim/mitran_conf07, katti_sigcomm07, gunduz_allerton08}, the communicating UEs first send their messages to the RN which then processes the received signals according to a given relaying strategy and broadcasts to the mobiles. Two-way relaying provides interference-free reception at each mobile by canceling the self-interference before decoding the unknown message. Though there are some similarities, our system model differs from the conventional TWRCs in two main points. First, in our model instead of having two nodes communicating with each other via a RN, two UEs communicate with a BS, one in UL direction and the other in DL direction, with the aid of an intermediate RN. Secondly, if the UEs are close to each other then the receiving UE will see the \emph{interference} from the other UE both from the direct path and through the RN path. Hence, the interference needs to be treated in a subtle way in this setup.

In IRC, a common RN helps simultaneously two (or more) source-destination pairs where each source creates interference to non-intended destinations \cite{maric/debora_isit08, sahin/simeone_isit09}. Depending on channel conditions, the RN can be exploited for both cooperative signal and interference forwarding purposes to help destinations. As opposed to the IRC, in our system model there are three end-nodes where one of the nodes (i.e. the BS) is both transmitter and receiver. Hence, this transceiver node might exploit its transmit signal as side information while processing its received signal coming from the RN.

In this paper, we study the discrete memoryless ITWRC with three end-nodes which might represent a cellular network where UL and DL communications are multiplexed by using a full-duplex RN. For this system model, we explore two relaying strategies which exploit characteristics of TWRC and interference channels (ICs) in order to attain better achievable rates. In particular, a cut-set outer bound and two achievable rate regions corresponding to decode-and-forward (DF) relaying with and without rate splitting at the transmitting UE, and partial DF and compress-and-forward (pDF+CF) relaying strategies are derived.

\begin{figure}[t]
\centering
\includegraphics[width=3.3in, height=2.2in, viewport = 20 50 700 530, clip]{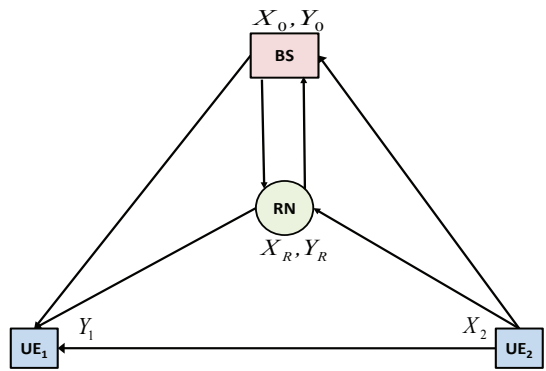}
\caption{A general Interference Two-Way Relay Channel (ITWRC) with three end-nodes in cellular set-up. \label{fig:TwoWayRelayInCellularSystems}}
\end{figure}


\section{The channel model \label{sec:DMCmodel}}

Consider the ITWRC scenario illustrated in Fig.~\ref{fig:TwoWayRelayInCellularSystems}. Here, a common full-duplex BS communicates simultaneously with two distinct UEs in UL and DL directions with the help of an intermediate RN. In the following, we first give the discrete memoryless (d.m.) ITWRC model shown in Fig.~\ref{fig:dmITWRC}.

\begin{definition}
The d.m. ITWRC with three end-nodes is defined by $\big\{(\mathcal{X}_0, \mathcal{X}_2, \mathcal{X}_R)$, $p(y_0, y_1, y_R | x_0, x_2, x_R)$, $(\mathcal{Y}_0, \mathcal{Y}_1, \mathcal{Y}_R) \big\}$, where $\mathcal{X}_0$, $\mathcal{X}_2$ and $\mathcal{X}_R$ are the input alphabets, $\mathcal{Y}_0$, $\mathcal{Y}_1$ and $\mathcal{Y}_R$ are the output alphabets and $p(.|.)$ is the channel transition probability matrix. All the alphabets are finite. The time invariant and memoryless channel is represented by
\begin{align*}
\Pr(\textbf{y}_0, \textbf{y}_1, \textbf{y}_R | \textbf{x}_0, \textbf{x}_2, \textbf{x}_R) = \prod_{i=1}^{n} p(y_{0i}, y_{1i}, y_{Ri} | x_{0i}, x_{2i}, x_{Ri})
\end{align*}
where $x_{0i}, x_{2i}, x_{Ri}, y_{0i}, y_{1i}, y_{Ri}$ are the inputs and outputs of the channel at time $i$, respectively.
\end{definition}

At the beginning of each block of $n$ channel uses, the message sources $\textrm{BS}$ and $\textrm{UE}_2$ produce random integers $W_0$ and $W_2$ from the sets $\mathcal{W}_0 \in \{1, 2, \ldots, 2^{n R_0}\}$ and $\mathcal{W}_2 \in \{1, 2, \ldots, 2^{n R_2}\}$, respectively. The message pair $(W_0, W_2)$ is drawn according to a uniform distribution over $\mathcal{W}_0 \times \mathcal{W}_2$ and occurs with probability $1 / 2^{n (R_0 + R_2)}$.

We assume restricted encoders at the $\textrm{BS}$ and $\textrm{UE}_2$, e.g., the encoder outputs do not depend on feedback signals. Hence, we define an $((2^{n R_0}, 2^{n R_2}), n)$ code for the ITWRC as follows:
\begin{itemize}
	\item An index $W_m$, for each transmitter terminal, selected uniformly from the message set $\mathcal{W}_m$ and the corresponding codeword $X^{n}_{m} (W_m) \in \mathcal{X}_m ,\; m \in \{0, 2\}$
	\item Two source encoding functions that map each message $W_m \in \mathcal{W}_m$ into a codeword $\mathcal{X}_m^n (W_m)$
	\begin{align*}
	  f_{\textrm{BS}}   &: \mathcal{W}_0 \rightarrow \mathcal{X}_0^n, \\
	  f_{\textrm{UE}_2} &: \mathcal{W}_2 \rightarrow \mathcal{X}_2^n,
	\end{align*}	
	\item A set of relay encoding (causal) functions $\{f_{R, i}\}_{i=1}^{n}$ such that
	\begin{align*}
	  x_{R,i} &= f_{R, i}(Y_{R,1}, Y_{R,2}, \ldots, Y_{R,i-1}), \quad 1 \leq i \leq n,
	\end{align*}	
	\item and two decoding functions at the $\textrm{BS}$ and $\textrm{UE}_1$
	\begin{align*}
	  g_{\textrm{BS}}   &: \mathcal{Y}_{0}^n \rightarrow \mathcal{W}_{2} \\
	  g_{\textrm{UE}_1} &: \mathcal{Y}_{1}^n \rightarrow \mathcal{W}_{0}.
	\end{align*}	
\end{itemize}

We also define the average probability of error as
\begin{align*}
P_{e}^{(n)} &= \Pr\left[ \{W_0 \neq g_{\textrm{UE}_1}(\mathcal{Y}_{1}^n)\}  \cup \{W_2 \neq g_{\textrm{BS}}(\mathcal{Y}_{0}^n)\} \right].
\end{align*}	

A rate pair $(R_0, R_2)$ is said to be achievable for the ITWRC if there exists a sequence of $(2^{n R_0}, 2^{n R_2}, n)$ codes with the average probability of error $P_{e}^{(n)} \rightarrow 0$ as $n \rightarrow \infty$. The capacity region is the closure of the set of all achievable rate pairs $(R_0, R_2)$.

\section{Outer Bound \label{sec:OuterBound}}

Considering Fig.~\ref{fig:TwoWayRelayInCellularSystems} where all the nodes can see each other, using cut-set theorem \cite{book:Cover} for the DL communication from the BS to the $\textrm{UE}_1$ and for the UL communication from the $\textrm{UE}_2$ to the BS we have the following outer region:

\begin{subequations}  \label{eq:FDUB}
{\footnotesize
\begin{align}
R_{DL} &= \min \left\{\mathrm{I}\left(X_{0};Y_R, Y_1|X_{R}, X_{2}\right), \mathrm{I}\left(X_0, X_R; Y_1 | X_2 \right)\right\}  \label{eq:FDUB_Rdl} \\
R_{UL} &= \min \left\{\mathrm{I}\left(X_{2};Y_R, Y_0|X_{R}, X_{0}\right), \mathrm{I}\left(X_2, X_R; Y_0 | X_0 \right)\right\} \label{eq:FDUB_Rul}
\end{align}}
\end{subequations}
for a joint distribution that factors as
\begin{align} \label{eq:ITWRC:ULDL:UB:pdf}
p(x_0, x_2, x_R) p(y_0, y_1, y_R |x_0, x_2, x_R)
\end{align}	
where $X_0, X_R, X_2$ are the transmit signals at the BS, RN and $\textrm{UE}_2$, and $Y_0, Y_R, Y_1$ are the received signals at the BS, RN and $\textrm{UE}_1$, respectively.

\begin{remark}
Note that if we assume $\textrm{UE}_1$ and $\textrm{UE}_2$ are merged to be a single UE, e.g. $(X_1, Y_1) = (X_2, Y_2)$ (which is possible if we assume an infinite capacity link between the two UEs), and no direct link between $\textrm{UE}_2$ and the BS, e.g. $X_2 \leftrightarrow (X_R, Y_R) \leftrightarrow Y_0$ and $X_0 \leftrightarrow (X_R, Y_R) \leftrightarrow Y_2$ form a Markov chain. This particular setup is the TWRC with no direct link between the transmitters, the BS and UE, for which the outer bound given in \eqref{eq:FDUB} matches to the one derived in \cite{gunduz_allerton08}.
\end{remark}

\section{Decode-and-Forward (DF) relaying \label{sec:DF}}

\subsubsection{DF Relaying without Rate-Splitting \label{subsubsec:FD_DFNoRateSplit}}

In the following, we give an achievable rate region corresponding to a DF relaying scheme where no \emph{rate-splitting} employed at the $\textrm{UE}_2$ (note that in any case, for the channel model in consideration, rate-splitting is not needed at the $\textrm{BS}$), and $\textrm{UE}_1$ tries to decode all transmitted messages, first proposed in \cite{maric/debora_asilomar08} for the IRC.

\begin{proposition} \label{pro:FDDF:NoRateSplit}
For the system model defined above, any rate pair $(R_0, R_2)$ that satisfies
{\footnotesize
\begin{subequations} \label{eq:FDDF:NoRateSplit}
\begin{align}
&R_0 \leq \min \{ \mathrm{I}\left( X_0, X_R ; Y_1 | U_2, X_2 \right), \mathrm{I}\left( X_0 ; Y_R | X_2, U_0, U_2 \right)\}  \label{eq:FDDF:NoRateSplit_1} \\
&R_2 \leq \min \{ \mathrm{I}\left( X_2, X_R ; Y_0 | U_0, X_0 \right), \mathrm{I}\left( X_2 ; Y_R | X_0, U_0, U_2 \right)\}  \label{eq:FDDF:NoRateSplit_2} \\
&R_0 + R_2 \leq \min \{ \mathrm{I}\left( X_0, X_2, X_R ; Y_1 \right), \mathrm{I}\left( X_0, X_2 ; Y_R | U_0, U_2 \right)\}  \label{eq:FDDF:NoRateSplit_3}
\end{align}
\end{subequations}}
for a joint distribution that factors as
\begin{align*}
p(u_0, x_0) p(u_2, x_2) f(x_R | u_0, u_2) p(y_{0}, y_{1}, y_{R} | x_{0}, x_{2}, x_{R}),
\end{align*}
where $f(.)$ is a deterministic function, is achievable by using DF relaying without rate-splitting at the $\textrm{UE}_2$.
\end{proposition}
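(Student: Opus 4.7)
The plan is a standard block-Markov encoding with a regular (non-binning) decode-and-forward structure, combined with backward decoding at the two receiving terminals. Transmission takes place over $B$ blocks of $n$ channel uses. In block $b$, the BS and $\textrm{UE}_2$ inject new messages $w_0(b)$ and $w_2(b)$ while the relay, having decoded $(w_0(b-1), w_2(b-1))$ at the end of the previous block, cooperates coherently with both sources. The last block uses dummy new messages for flushing, and the resulting rate loss vanishes as $B \to \infty$.

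For codebook construction I would draw, for each $w_0' \in \mathcal{W}_0$, a codeword $u_0^n(w_0')$ i.i.d.\ according to $p(u_0)$, and for each $w_0 \in \mathcal{W}_0$ a codeword $x_0^n(w_0, w_0')$ conditionally i.i.d.\ according to $p(x_0 \mid u_0)$ given $u_0^n(w_0')$; the symmetric construction at $\textrm{UE}_2$ produces $u_2^n(w_2')$ and $x_2^n(w_2, w_2')$. The relay codeword is deterministic, $x_R^n(w_0', w_2') = f^n(u_0^n(w_0'), u_2^n(w_2'))$, which realizes the product factorization $p(u_0, x_0)\,p(u_2, x_2)\,f(x_R \mid u_0, u_2)$. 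In block $b$ the BS transmits $x_0^n(w_0(b), w_0(b-1))$, $\textrm{UE}_2$ transmits $x_2^n(w_2(b), w_2(b-1))$, and the relay transmits $x_R^n(w_0(b-1), w_2(b-1))$. At the end of block $b$ the relay, already knowing $u_0^n, u_2^n, x_R^n$, jointly decodes $(w_0(b), w_2(b))$ from $y_R^n(b)$ by joint typicality; the standard MAC error analysis produces the relay-side bounds $I(X_0; Y_R \mid X_2, U_0, U_2)$, $I(X_2; Y_R \mid X_0, U_0, U_2)$ and $I(X_0, X_2; Y_R \mid U_0, U_2)$ that form the second arguments of the minima in \eqref{eq:FDDF:NoRateSplit_1}--\eqref{eq:FDDF:NoRateSplit_3}.

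Backward decoding is then run at $\textrm{UE}_1$ and at the BS. Starting from block $B$ and proceeding down to block $1$, $\textrm{UE}_1$ uses $y_1^n(b)$ together with the already recovered $(w_0(b), w_2(b))$ to decode the pair $(w_0(b-1), w_2(b-1))$ by joint typicality; since $u_0^n, u_2^n, x_R^n, x_0^n, x_2^n$ are all tied to this pair, the error event splits into ``only $w_0'$ wrong,'' ``only $w_2'$ wrong'' and ``both wrong,'' giving the $Y_1$-side bounds $I(X_0, X_R; Y_1 \mid U_2, X_2)$ and $I(X_0, X_2, X_R; Y_1)$ that appear as the first arguments of \eqref{eq:FDDF:NoRateSplit_1} and \eqref{eq:FDDF:NoRateSplit_3}. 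At the BS, using its own past message $w_0(b-1)$ as genie side information and $w_2(b)$ recovered from the future block, a single-user joint-typicality test on $y_0^n(b)$ recovers $w_2(b-1)$ and yields $I(X_2, X_R; Y_0 \mid U_0, X_0)$, the first argument of \eqref{eq:FDDF:NoRateSplit_2}.

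The main obstacle is the correlated-codeword bookkeeping in the error analysis: because $x_R^n$ is a deterministic function of $(u_0^n, u_2^n)$ while $x_0^n$ and $x_2^n$ are superposed on them, the joint-typicality events at each decoder couple several codeword layers and must be decomposed carefully to obtain precisely the stated MAC-type mutual informations (and not spurious extra terms involving $U_0$ or $U_2$ alone, which collapse by the Markov structure $U_0 \to X_0$ and $U_2 \to X_2$). Once the three decoders' regions are established, the region \eqref{eq:FDDF:NoRateSplit} follows by taking componentwise minima over the relay, $\textrm{UE}_1$ and BS constraints and letting first $n \to \infty$ and then $B \to \infty$ to absorb the block-Markov overhead.
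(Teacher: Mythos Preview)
Your construction is correct and matches the scheme the paper invokes: block-Markov regular DF with MAC decoding at the relay and backward decoding at the two destinations, exactly as in the cited \cite{maric/debora_asilomar08}; the paper itself does not spell this out but defers to that reference. The one point you leave implicit that the paper (via its Remark) makes explicit is why the ``only $w_2'$ wrong'' event at $\textrm{UE}_1$ does not impose an individual $R_2$ constraint on the $Y_1$ side: since $\textrm{UE}_1$ is not required to output $W_2$, that event is simply not declared an error (non-unique decoding of the interferer), which is precisely why \eqref{eq:FDDF:NoRateSplit} has two $Y_1$-side inequalities rather than the three of a genuine MAC.
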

\begin{proof}
The proof follows from \cite[Theorem 1]{maric/debora_asilomar08} with proper definition of random variables.
\end{proof}

\begin{remark}
The second terms of bounds \eqref{eq:FDDF:NoRateSplit_1}~-~\eqref{eq:FDDF:NoRateSplit_3} are required in order to provide reliable decoding at the relay. Since the RN decodes both indexes, possible error events at the RN are the same as in the multi access channel (MAC) \cite{book:Cover}. The first terms of bounds \eqref{eq:FDDF:NoRateSplit_1}~-~\eqref{eq:FDDF:NoRateSplit_3} are due to decoding constraints at the $\textrm{BS}$ and $\textrm{UE}_1$. Note that, for this particular scheme, in the encoding of the $\textrm{UE}_2$'s messages rate-splitting is not exploited. $\textrm{UE}_1$ jointly decodes messages $(W_0, W_2)$ as in the MAC. However, compared to the regular MAC rate constraints, the error in decoding the unwanted message (sent by $\textrm{UE}_2$) at $\textrm{UE}_1$ is ignored, as in \cite{maric/debora_asilomar08}, and therefore there is one less rate constraint when compared to the MAC rate bounds. Note also that the channel seen by $\textrm{BS}$ is equivalent to the regular relay channel model \cite{cover_relay_jnl79}, and since it has its own message, it can cancel its own interference from the signal forwarded by the RN.
\end{remark}

\subsubsection{DF Relaying with Rate-Splitting \label{subsubsec:FD_DFwithRateSplit}}

Rate splitting is known to be the best achievable scheme \cite{han/kobayashi_jnl81} for the ICs. In our system model, since the $\textrm{UE}_2$ interferes with $\textrm{UE}_1$ we also exploit \emph{rate splitting} at the $\textrm{UE}_2$ (note that rate-splitting is not needed at the $\textrm{BS}$ due to having side information) for DF relaying. For the ITWRC with three end-nodes, we have the following theorem on achievable rates corresponding to the DF with rate-splitting.

\begin{thm} \label{theorem:RateSplitting}
For the d.m. ITWRC, the following rate region, $\mathcal{R}_{DF+RS}$,
{\footnotesize
\begin{subequations} \label{eq:ITWRC:ULDL:DF:RN}
\begin{align}
&\mathcal{R}_{DF+RS} \stackrel{\Delta}{=} \Bigg\{(R_0, R_2): R_0\geq 0, R_{2} \geq 0,   \notag \\
&R_{0}        \leq \min \Big\{ \mathrm{I}\left( X_0 ; Y_R | U_0, U_{2c}, U_{2p}, X_{2c}, X_{2p} \right) , \notag \\
&\hspace{1.6in}  \mathrm{I}\left(X_{0}, X_R ; Y_1 | U_{2c}, X_{2c} \right) \Big\}    \label{eq:ITWRC:ULDL:DFRN:2a} \\
&R_{2}        \leq \min \Big\{ \mathrm{I}\left( X_{2c}, X_{2p}; Y_R | U_0, U_{2c}, U_{2p}, X_0  \right),  \notag \\
&\hspace{0.7in}  \mathrm{I}\left(X_{2c}, X_{2p}, X_R ; Y_0 | U_0, X_0 \right),  \notag \\
&\hspace{1in}  \mathrm{I}\left(X_{2p} ; Y_R | U_0, U_{2c}, U_{2p}, X_0, X_{2c} \right) \notag \\
&\hspace{1in}  + \min \big\{ \mathrm{I}\left(X_{2c}; Y_R | U_0, U_{2c}, U_{2p}, X_0, X_{2p} \right), \notag \\
&\hspace{1.6in} \mathrm{I}\left(X_{2c}, X_R ; Y_1| U_0, X_0 \right)\big\} \Big\}    \label{eq:ITWRC:ULDL:DFRN:2b} \\
&R_0 + R_{2}   \leq \displaystyle \min \Big\{ \mathrm{I}\left( X_0, X_{2c}, X_{2p} ; Y_R | U_0, U_{2c}, U_{2p} \right),  \notag \\
&\hspace{0.1in} \displaystyle \mathrm{I}\left(X_{2p} ; Y_R | U_0, U_{2c}, U_{2p}, X_0, X_{2c} \right) +  \notag \\
&\hspace{0.1in} \min \big\{ \mathrm{I}\left(X_0, X_{2c}; Y_R | U_0, U_{2c}, U_{2p}, X_{2p} \right), \mathrm{I}\left(X_{0}, X_{2c}, X_R ; Y_1 \right)\big\}, \notag \\
&\hspace{0.1in} \displaystyle \mathrm{I}\left( X_0, X_{2p}; Y_R | U_0, U_{2c}, U_{2p}, X_{2c} \right) +  \notag \\
&\hspace{0.1in} \min \big\{\mathrm{I}\left( X_{2c} ; Y_R | U_0, U_{2c}, U_{2p}, X_0, X_{2p} \right), \mathrm{I}\left(X_{2c}, X_R ; Y_1 | U_0, X_0 \right) \big\}  \Big\}   \label{eq:ITWRC:ULDL:DFRN:2c}\\
&2 R_0 + R_{2} \leq \displaystyle \mathrm{I}\left( X_0, X_{2p}; Y_R | U_0, U_{2c}, U_{2p}, X_{2c} \right) + \notag \\
&\hspace{0.1in} \min\big\{\mathrm{I}\left( X_0, X_{2c}; Y_R | U_0, U_{2c}, U_{2p}, X_{2p} \right), \mathrm{I}\left(X_{0}, X_{2c}, X_R ; Y_1 \right) \big\} \Bigg\} \label{eq:ITWRC:ULDL:DFRN:2d}
\end{align}	
\end{subequations}}
is achievable with the DF relaying and rate splitting at the $\textrm{UE}_2$ for a joint distribution that factors as
{\footnotesize
\begin{align} \label{eq:ITWRC:ULDL:DF:pdf}
p(x_0) p(u_2) p(x_2 | u_2) p(x_R | x_0, u_2) p(y_0, y_1, y_R |x_0, x_2, x_R).
\end{align}}	
\end{thm}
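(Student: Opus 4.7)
The plan is to establish achievability by a block Markov decode-and-forward scheme in which $\textrm{UE}_2$ splits its message $W_2$ into a common part $W_{2c}$ of rate $R_{2c}$ and a private part $W_{2p}$ of rate $R_{2p}$, with $R_{2c} + R_{2p} = R_2$. The RN decodes the triple $(W_0, W_{2c}, W_{2p})$ jointly as in a three-user MAC. The BS, possessing its own $W_0$ as side information, cancels the self-interference contained in $X_R$ and then jointly decodes $(W_{2c}, W_{2p})$ from $Y_0$. $\textrm{UE}_1$ only decodes $(W_0, W_{2c})$ while treating $W_{2p}$ as noise, so that no rate penalty on $W_{2p}$ is charged at $\textrm{UE}_1$; this asymmetry is the mechanism responsible for the shape of \eqref{eq:ITWRC:ULDL:DFRN:2b}--\eqref{eq:ITWRC:ULDL:DFRN:2d}.

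First I would construct the codebook over $B$ transmission blocks by superposition consistent with \eqref{eq:ITWRC:ULDL:DF:pdf}: $U_0^n$ is drawn i.i.d.\ with the appropriate marginal and $X_0^n$ superimposed on it; at $\textrm{UE}_2$, the cooperation codewords $U_{2c}^n, U_{2p}^n$ are drawn independently and the channel inputs $X_{2c}^n, X_{2p}^n$ are superimposed on them; and $X_R^n$ is a deterministic function of the $(U_0, U_{2c}, U_{2p})$ indices the RN decoded in the previous block. In block $b$, fresh messages $(W_0(b), W_{2c}(b), W_{2p}(b))$ are transmitted via $(X_0, X_{2c}, X_{2p})$, while $X_R$ carries the cooperation layer of the triple decoded in block $b-1$, enabling coherent combining at $\textrm{UE}_1$ and the BS.

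Next I would carry out the standard joint-typicality error analysis at the three receivers. At the RN, reliable decoding of $(W_0(b), W_{2c}(b), W_{2p}(b))$ conditioned on the previously decoded $(U_0, U_{2c}, U_{2p})$ yields the three-user MAC inequalities $\mathrm{I}(X_0; Y_R | \cdot)$, $\mathrm{I}(X_{2c}; Y_R | \cdot)$, $\mathrm{I}(X_{2p}; Y_R | \cdot)$ together with the pairwise and triple mutual-information terms that appear as the first clauses in each bound of \eqref{eq:ITWRC:ULDL:DFRN:2a}--\eqref{eq:ITWRC:ULDL:DFRN:2c}. At the BS, subtracting $X_0$ reduces the channel to a two-user MAC with a relay in the UL direction, producing the $\mathrm{I}(X_{2c}, X_{2p}, X_R; Y_0 | U_0, X_0)$ constraint on $R_2$. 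At $\textrm{UE}_1$, backward (or sliding-window) decoding across two consecutive blocks with coherent combining through $X_R$ produces the two-user MAC bounds $\mathrm{I}(X_0, X_R; Y_1 | U_{2c}, X_{2c})$, $\mathrm{I}(X_{2c}, X_R; Y_1 | U_0, X_0)$, and $\mathrm{I}(X_0, X_{2c}, X_R; Y_1)$ on $R_0$, $R_{2c}$, and $R_0 + R_{2c}$ respectively, with no constraint on $R_{2p}$ since errors on $W_{2p}$ are ignored at $\textrm{UE}_1$.

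Finally, I would eliminate $R_{2c}, R_{2p} \geq 0$ subject to $R_{2c} + R_{2p} = R_2$ by Fourier--Motzkin elimination applied to the union of RN, BS, and $\textrm{UE}_1$ inequalities above. Pairing the $\textrm{UE}_1$ bound on $R_{2c}$ with the two RN bounds on $R_{2p}$ (one involving only $X_{2p}$, the other coupling $X_0$ and $X_{2p}$) collapses them into the nested $\min$-plus-$\min$ expressions of \eqref{eq:ITWRC:ULDL:DFRN:2b}--\eqref{eq:ITWRC:ULDL:DFRN:2c} and produces the $2R_0 + R_2$ inequality \eqref{eq:ITWRC:ULDL:DFRN:2d}. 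I expect this Fourier--Motzkin step to be the main obstacle: keeping track of which conditioning variables survive in each combined inequality, and verifying that the dominating pairings of common-rate constraints at $\textrm{UE}_1$ with private-rate constraints at the RN are exactly the ones displayed, is what makes the region noticeably more intricate than the no-rate-splitting region of Proposition~\ref{pro:FDDF:NoRateSplit}.
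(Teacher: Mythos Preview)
Your proposal is correct and follows essentially the same route as the paper: block-Markov superposition coding with rate splitting at $\textrm{UE}_2$, three-user MAC decoding at the RN, backward decoding at the BS (with self-interference cancellation giving only the single sum-rate constraint on $R_{2c}+R_{2p}$) and at $\textrm{UE}_1$ (two-user MAC on $(W_0,W_{2c})$ with $W_{2p}$ treated as noise), followed by Fourier--Motzkin elimination of $R_{2c},R_{2p}$. The paper commits specifically to backward decoding at both destinations rather than sliding-window, but otherwise your outline matches the appendix proof step for step.
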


\begin{proof}
See Appendix~\ref{app:DFproof} for the proof.
\end{proof}

\begin{figure}[t]
\centering
\includegraphics[width=3.3in, height=2.2in]{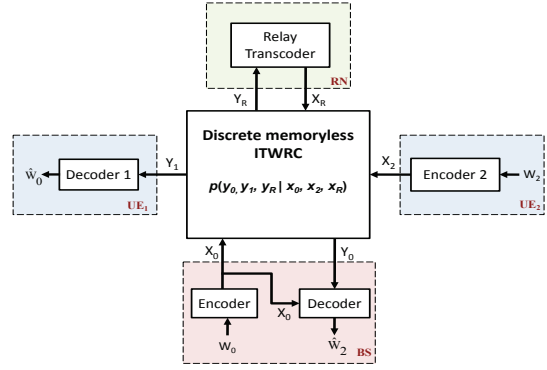}
\caption{The discrete memoryless ITWRC with three end-nodes. \label{fig:dmITWRC}}
\end{figure}

\section{Mixed Partial DF and CF (pDF+CF) relaying \label{sec:pDF+CF}}

In this scheme, we give a coding scheme where the RN partially decodes a part of messages transmitted by each transmitter; and compresses the remaining part using Wyner-Ziv compression \cite{wyner_jnl76}. To obtain an achievable rate region we use backward decoding followed by sliding-window decoding technique at both destinations (BS and $\mathrm{UE}_1$). Here, the $\mathrm{UE}_1$ also decodes some part of the message transmitted by the interfering user $\mathrm{UE}_2$, which is also decoded at the RN, so as to mitigate interference seen at the $\mathrm{UE}_1$. For the ITWRC with three end-nodes, assuming partial DF and CF (pDF+CF) relaying, we have the following corresponding theorem.

\begin{thm} \label{theorem:pDF+CF}
For the d.m. ITWRC with three end-nodes, using partial DF and CF relaying the rate region, $\mathcal{R}_{pDF+CF}$,
{\footnotesize
\begin{subequations} \label{eq:ITWRC:ULDL:pDF+CF}
\begin{align}
&\mathcal{R}_{pDF+CF} \stackrel{\Delta}{=} \Bigg\{(R_0, R_2) : R_0 \geq 0, R_{2} \geq 0,   \label{eq:ITWRC:ULDL:cor:pDF+CF:a} \\
&R_{0} \leq I(X_0; Y_1, \hat{Y}_R \;|\; U_0, U_2, V_0, V_2, X_R) \notag \\
&\hspace{0.3in} + \min \big\{I(V_0 ; Y_R | X_R, U_0, U_2, V_2), I(U_0, V_0 ; Y_1 | U_2, V_2) \big\}  \label{eq:ITWRC:ULDL:cor:pDF+CF:b} \\
&R_{2} \leq  I(X_2; Y_0, \hat{Y}_R \;|\; U_0, U_2, V_0, V_2, X_0, X_R) \notag \\
&\hspace{0.3in} + \min \big\{I(V_2 ; Y_R | X_R, U_0, U_2, V_0), I(U_2, V_2 ; Y_0 | U_0, V_0, X_0), \notag \\
&\hspace{1in} I(U_2, V_2 ; Y_1 | U_0, V_0) \big\} \label{eq:ITWRC:ULDL:cor:pDF+CF:c} \\
&R_0 + R_{2} \leq I(X_0; Y_1, \hat{Y}_R \;|\; U_0, U_2, V_0, V_2, X_R) \notag \\
&\hspace{0.6in} + I(X_2; Y_0, \hat{Y}_R \;|\; U_0, U_2, V_0, V_2, X_0, X_R) \notag \\
&\hspace{0.6in} + \min \big\{I(V_0, V_2 ; Y_R \;|\; X_R, U_0, U_2), \notag \\
&\hspace{1.2in} I(U_0, U_2, V_0, V_2 ; Y_1) \big\} \Bigg\} \label{eq:ITWRC:ULDL:cor:pDF+CF:d}
\end{align}	
\end{subequations}}
subject to constraint
{\footnotesize
\begin{align} \label{eq:rateConstraint:cor:pDF+CF}
&\max \big\{\mathrm{I}\left(\hat{Y}_R ; Y_R | Y_0, X_0, X_R, V_0, V_2 \right), \mathrm{I}\left(\hat{Y}_R ; Y_R | Y_1, X_R, V_0, V_2\right) \big\}  \notag \\
&\;\; \leq \min \big\{\mathrm{I}\left(X_R ; Y_0 | U_0, U_2, V_0, V_2, X_0 \right), \mathrm{I}\left(X_R ; Y_1 | U_0, U_2, V_0, V_2 \right) \big\}
\end{align}}
is achievable by using mixed pDF+CF relaying for a joint distribution that factors as
\begin{align} \label{eq:ITWRC:ULDL:pDF+CF:pdf}
&p(u_0) p(v_0 | u_0) p(x_0 | v_0) p(u_2) p(v_2 | u_2) p(x_2 | v_2) p(x_R | u_0, u_2) \notag \\
& \hspace{0.1in} \cdot p(\hat{y}_R | y_R, x_R, u_0, u_2, v_0, v_2) p(y_0, y_1, y_R |x_0, x_2, x_R).
\end{align}
\end{thm}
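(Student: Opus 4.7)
The plan is to combine partial decode-and-forward with Wyner--Ziv compress-and-forward using block-Markov coding over $B+1$ blocks, backward decoding at $\textrm{UE}_1$, and sliding-window decoding at the $\textrm{BS}$. Each source splits its message into a ``common'' part (carried by the superposition chain $U_j \to V_j \to X_j$, $j \in \{0,2\}$) that is decoded at the relay and forwarded cooperatively via $X_R(U_0,U_2)$, and a ``residual'' part (embedded in $X_j$ conditionally on $V_j$) that is recovered at the destination with the help of a Wyner--Ziv description $\hat{Y}_R$.

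For codebook generation and encoding, I would independently generate the two superposition hierarchies, let $X_R^n$ be the deterministic image of the common indices decoded in the previous block, and generate a Wyner--Ziv quantization codebook $\hat{Y}_R^n$ conditionally on $(X_R, U_0, U_2, V_0, V_2)$ whose bin index is carried cooperatively in the following block through $X_R$. In block $b$, the relay first jointly decodes the current $(V_0, V_2)$ indices from $Y_R^n$ (treating the residual components of $X_0, X_2$ as noise), yielding MAC-style constraints $\mathrm{I}(V_0; Y_R|\cdot)$, $\mathrm{I}(V_2; Y_R|\cdot)$, and $\mathrm{I}(V_0,V_2; Y_R|\cdot)$, then bins $Y_R^n$. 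At $\textrm{UE}_1$, backward decoding starts from block $B+1$: the common indices $(m_0, m_2)$ of block $b$ are recovered from $Y_1$ in block $b+1$ via joint typicality with $(U_0, U_2, X_R)$; with these in hand the refinements $V_0, V_2$ and the quantization bin of block $b$ are decoded from $Y_1^{(b)}$; and finally the residual BS message is decoded using $(Y_1^{(b)}, \hat{Y}_R^{(b)})$ jointly. At the $\textrm{BS}$, sliding-window decoding performs the symmetric procedure for $\textrm{UE}_2$'s message, exploiting the BS's own transmit signal $X_0^n$ as side information.

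This decoding sequence leads directly to the terms in \eqref{eq:ITWRC:ULDL:cor:pDF+CF:b}--\eqref{eq:ITWRC:ULDL:cor:pDF+CF:d}: the leading $\mathrm{I}(X_0; Y_1, \hat{Y}_R | U_0, U_2, V_0, V_2, X_R)$ and $\mathrm{I}(X_2; Y_0, \hat{Y}_R | \cdot)$ expressions come from the joint decoding of residuals with the compressed relay observation, while the $\min$ terms arise because each forwarded common part is simultaneously constrained at the relay (by $\mathrm{I}(V_j; Y_R|\cdot)$), at its intended destination via cooperation, and---crucially for $\textrm{UE}_2$'s common part---at $\textrm{UE}_1$ for interference mitigation, explaining the three-way $\min$ in \eqref{eq:ITWRC:ULDL:cor:pDF+CF:c}. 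The constraint \eqref{eq:rateConstraint:cor:pDF+CF} then expresses the standard Wyner--Ziv binning requirement in its $\max$/$\min$ form, because the same compression index must be recoverable at both $\textrm{UE}_1$ and the $\textrm{BS}$ using their respective side information $(Y_1, X_R, V_0, V_2)$ and $(Y_0, X_0, X_R, V_0, V_2)$.

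The main obstacle will be the careful enumeration of joint error events at each decoder and the subsequent Fourier--Motzkin elimination of the auxiliary split rates (e.g.\ the rates carried by $(U_0, V_0)$ relative to the total $R_0$, and analogously for $R_2$) to recover the compact inequalities in \eqref{eq:ITWRC:ULDL:pDF+CF}. Particular care is needed in showing that the combination of sliding-window decoding and Wyner--Ziv index recovery yields the joint-typicality metric with $(Y_1, \hat{Y}_R)$ at $\textrm{UE}_1$ and $(Y_0, \hat{Y}_R)$ at the $\textrm{BS}$, with the correct conditioning on the forwarded auxiliaries $(U_0, U_2, V_0, V_2, X_R)$; this is the delicate step that produces the first mutual-information term in \eqref{eq:ITWRC:ULDL:cor:pDF+CF:b} and \eqref{eq:ITWRC:ULDL:cor:pDF+CF:c}.
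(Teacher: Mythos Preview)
Your proposal is correct and follows essentially the same route as the paper: block-Markov superposition $U_j\to V_j\to X_j$, relay MAC decoding of the $V$-layers followed by Wyner--Ziv binning, and at each destination recovery of the common parts, then the compression index, then the residual message via joint typicality with $(Y_k,\hat Y_R)$. The only procedural differences are that the paper applies backward decoding \emph{followed by} sliding-window decoding at \emph{both} destinations (rather than backward at $\mathrm{UE}_1$ and sliding-window at the $\mathrm{BS}$ as you suggest), and $X_R$ is not a deterministic image of $(U_0,U_2)$ but is generated according to $p(x_R\mid u_0,u_2)$ with an additional index $s\in[1,2^{nR_3}]$ carrying the compression bin; neither point alters the achievable region, and the final combination of the split-rate constraints is exactly the straightforward addition you anticipate (no nontrivial Fourier--Motzkin is needed since $R_j=R_{j,c}+R_{j,d}$ with separate bounds on each part).
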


\begin{proof}
See Appendix~\ref{app:pDF+CFproof} for the proof.
\end{proof}

\begin{remark}
In Theorem~\ref{theorem:pDF+CF} if we set $U_0 = U_2 = V_0 = V_2 \equiv \emptyset$ then the obtained achievable rate region corresponds to pure compress-and-forward (CF)
relaying strategy.
\end{remark}

\section{Discussions \label{sec:discussions}}

Although we are not able to share numerical results for Gaussian channel case, due to space limitation, in the following we want to add some comments based on our observations.

In the current cellular systems, UL and DL communications are orthogonal either in time (time-division duplex) or in frequency (frequency-division duplex). One of the major problems in these conventional duplexing schemes appears in the UL communications from UE to BS since limited power resources at UEs. To tackle this problem, i.e. to extend cell coverage, recently RN deployment in 4G cellular systems (such as LTE-A and 802.16j) has been proposed.

Regarding the framing structures of 4G cellular systems, see \cite{lte_211} for LTE, it is viable to have better achievable rates than the conventional schemes by multiplexing UL and DL communication via a two-way RN, as in our ITWRC model.
However, due to asymmetric channel conditions in UL and DL directions, selection of the forwarding strategy plays a cardinal role in better harnessing the potentials of two-way relaying. Regarding transmit power and channel asymmetries, the DF relaying might provide better achievable rates since the multiplexing loss in decoding both BS and UE signals will vanish with the difference in received signal powers.

\section{Conclusions \label{sec:conclusion}}

In this paper, we studied a three-node interference two-way relay channel consisting of one BS, two UEs  and a single RN. Communicating two information flows, one in uplink and the other in downlink direction, between the BS and the UEs are facilitated by the intermediate two-way RN. Specifically, we derived a cut-set outer bound and two achievable rate regions corresponding to the DF relaying with and without rate splitting, and the partial DF and CF relaying strategies. We also pointed out that a two-way relay node, which enables concurrent uplink and downlink communications, might improve spectral efficiency in the next generation cellular systems.

\begin{appendices}

\section{Proof of Theorem \ref{theorem:RateSplitting}} \label{app:DFproof}

Fix the input distribution $p(x_0) p(u_2) p(x_2 | u_2) p(x_R | x_0, u_2)$. The messages $W_{0,b} \in [1, 2^{n R_{0}}]$ and $W_{2,b} = (W_{2c,b}, W_{2p,b}) \in [1, 2^{n R_{2}}]$ where $W_{2c,b} \in [1, 2^{n R_{2c}}]$ and $W_{2p,b} \in [1, 2^{n R_{2p}}]$ with $R_2 = R_{2c} + R_{2p}$, for $b = 1, 2, \ldots, B$, will be sent over the ITWRC in $B+1$ blocks each of $n$ transmissions. The messages are uniformly distributed and independent of each other. A random coding argument is used to show the achievability of $\mathcal{R}_{DF+RS}$. In each of the $B$ blocks the same codebook is used, i.e. regular encoding. Note that as $B \rightarrow \infty$, $R_t B /(B+1) \rightarrow R_t, \; t \in \{0, 2c, 2p\}$. 
In the following all the codewords are assumed to be vector of length $n$.

\textbf{\textit{Codebook construction}:}
\begin{itemize}
	\item Generate $2^{n R_0}$ i.i.d. sequences $\textbf{u}_0$ each with probability $p(\textbf{u}_0) = \prod_{i=1}^{n} p(u_{0,i})$. Label them $\textbf{u}_0 (w'_0)$, where $w'_{0} \in [1, 2^{n R_0}]$.
	\item For every $\textbf{u}_0(w'_0)$ generate $2^{n R_0}$ i.i.d. sequences $\textbf{x}_0$ each with probability $p(\textbf{x}_0 | \textbf{u}_1 (w'_0)) = \prod_{i=1}^{n} p(x_{0,i} | u_{0,i}(w'_0))$. Label them $\textbf{x}_0(w'_0, w_0)$, where $w_{0} \in [1, 2^{n R_0}]$.
	\item Generate $2^{n R_{2c}}$ i.i.d. sequences $\textbf{u}_{2c}$ and $2^{n R_{2p}}$ sequences $\textbf{u}_{2p}$ each with probability $\prod_{i=1}^{n} p(u_{2c, i})$ and $\prod_{i=1}^{n} p(u_{2p, i})$, respectively. Label them $\textbf{u}_{2c}(w'_{2c})$ and $\textbf{u}_{2p}(w'_{2p})$ where $w'_{2c} \in [1, 2^{n R_{2c}}]$ and $w'_{2p} \in [1, 2^{n R_{2p}}]$.
	\item For every $\textbf{u}_{2c}(w'_{2c})$ generate $2^{n R_{2c}}$ i.i.d. sequences $\textbf{x}_{2c}$ each with probability $\prod_{i=1}^{n} p(x_{2c,i} | u_{2c,i}(w'_{2c}))$. Label them $\textbf{x}_{2c}(w'_{2c}, w_{2c})$, where $w_{2c} \in [1, 2^{n R_{2c}}]$.
	\item For every $\textbf{u}_{2p}(w'_{2p})$ generate $2^{n R_{2p}}$ i.i.d. sequences $\textbf{x}_{2p}$ each with probability $\prod_{i=1}^{n} p(x_{2p,i} | u_{2p,i}(w'_{2p}))$. Label them $\textbf{x}_{2p}(w'_{2p}, w_{2p})$, where $w_{2p} \in [1, 2^{n R_{2p}}]$.
	\item For each $(w'_{2c}, w_{2c}, w'_{2p}, w_{2p})$ generate $\textbf{x}_{2}(w'_{2c}, w_{2c}, w'_{2p}, w_{2p})$ which is a deterministic function of $(\textbf{x}_{2c}, \textbf{x}_{2p})$.
	\item For each $(w'_{0}, w'_{2c}, w'_{2p})$ generate $\textbf{x}_{R}(w'_{0}, w'_{2c}, w'_{2p})$ which is a deterministic function of $(\textbf{u}_0, \textbf{u}_{2c}, \textbf{u}_{2p})$.
\end{itemize}

\textbf{\textit{Encoding}:} The messages $w_{0,b}$, $w_{2c,b}$, $w_{2p,b}$, $b = 1, \ldots, B$, are encoded in superposition block Markov fashion where in the first block, $b=1$, the $\textrm{BS}$ sends $\textbf{x}_0(1, w_{0,1})$ and the $\textrm{UE}_2$ sends $\textbf{x}_{2}(1, w_{2c, 1}, 1, w_{2p,1})$ which is a deterministic function of both $\textbf{x}_{2c}(1, w_{2c,1})$ and $\textbf{x}_{2p}(1, w_{2p,1})$; and the $\textrm{RN}$ transmits $\textbf{x}_{R}(1, 1, 1)$. Then, in the block $b$, $b = 2, 3, \ldots, B$
\begin{align*}
\textbf{x}_{0,b} &= \textbf{x}_1(w'_{0, b}, w_{0, b}), \\
\textbf{x}_{2,b} &= \textbf{x}_2(w'_{2c,b}, w_{2c,b}, w'_{2p,b}, w_{2p,b})
\end{align*}
where $(w'_{0, b}, w'_{2c, b},w'_{2p, b}) = (w_{0,b-1}, w_{2c,b-1}, w_{2p,b-1})$. Assume that before block $b$, $b = 2, 3, \ldots, B+1$, the $\textrm{RN}$ has the estimates $(\hat{w}_{0,b-1}, \hat{w}_{2c, b-1}, \hat{w}_{2p, b-1})$ for the message triple $(w_{0, b-1}, w_{2c, b-1}, w_{2p, b-1})$; and hence it transmits $\textbf{x}_{r}(w'_{0, b}, w'_{2c, b}, w'_{2p, b})$ where $(w'_{0, b}, w'_{2c, b}, w'_{2p, b}) = (\hat{w}_{0, b-1}, \hat{w}_{2c, b-1}, \hat{w}_{2p, b-1})$. And in block $B+1$ the $\textrm{BS}$ and $\textrm{UE}_2$ transmit, respectively,
\begin{align*}
\textbf{x}_{0,B+1} &= \textbf{x}_1(w_{0,  B}, 1), \\
\textbf{x}_{2,B+1} &= \textbf{x}_2(w_{2c, B}, 1, w_{2p, B}, 1).
\end{align*}

\textbf{\textit{Decoding at the $\textrm{RN}$}:} For the decoding at the $\textrm{RN}$ in order to obtain \emph{cooperation} after each block $b$, $b = 1, 2, \ldots, B$, the $\textrm{RN}$ chooses $(\hat{w}_{0, b}, \hat{w}_{2c, b}, \hat{w}_{2p, b})$, assuming it has already decoded the previous message triplet $(w_{0, b-1}, w_{2c, b-1}, w_{2p, b-1})$ correctly, such that
\begin{align} \label{eq:TWRC:ULDL:DFRN:TypicalDecoder:Relay}
&\displaystyle \{\textbf{u}_{0}(w'_{0,b}), \textbf{x}_0 (w'_{0,b}, \hat{w}_{0, b}), \textbf{u}_{2c}(w'_{2c, b}), \textbf{x}_{2c} (w'_{2c, b}, \hat{w}_{2c, b}), \notag \\
&\displaystyle \hspace{0.1in} \textbf{u}_{2p}(w'_{2p, b}), \textbf{x}_{2p} (w'_{2p, b}, \hat{w}_{2p, b}), \textbf{x}_{R} (w'_{0, b}, w'_{2c, b}, w'_{2p, b}), \textbf{y}_{R, b}\} \notag \\
&\displaystyle \hspace{0.6in} \in \mathcal{A}_{\epsilon}(U_0, X_0, U_{2c}, X_{2c}, U_{2p}, X_{2p}, X_R, Y_R)
\end{align}
where $ \mathcal{A}_{\epsilon}(\cdot)$ represents $\epsilon$-strongly typical sets \cite{book:Cover}.

The error analysis at the $\textrm{RN}$ corresponds to MAC with three users \cite{book:Cover}. From \eqref{eq:TWRC:ULDL:DFRN:TypicalDecoder:Relay} if the following rate constraints
\begin{subequations} \label{eq:DFRN:ratesAtRN}
\begin{align}
\sum_{i \in \mathcal{S}} R_{i} &\leq \mathrm{I}\left( X(\mathcal{S}); Y_R | U_0, U_{2c}, U_{2p}, X(\mathcal{S}^c) \right)
\end{align}	
\end{subequations}
for all $\mathcal{S} \subseteq \{0, 2c, 2p\}$ where $X(\mathcal{S}) = \{X_i : i \in \mathcal{S}\}$ and $n \rightarrow \infty $ are satisfied then the decoding error probability can be made small.

\textbf{\textit{Decoding at the $\textrm{BS}$ and $\textrm{UE}_1$}:} For the decoding process a \emph{backward decoding} scheme is used at both the $\textrm{BS}$ and $\textrm{UE}_1$. We start with the decoding process at the $\textrm{BS}$. First we note that the $\textrm{BS}$ does not suffer from interference, since it can precancel its own transmitted signal before starting the decoding, i.e., it has side information of its own transmitted signal.

Then in block $b$ the $\textrm{BS}$ looks for the pair $(\hat{w}'_{2c, b}, \hat{w}'_{2p, b}) = (\hat{w}_{2c, b-1}, \hat{w}_{2p, b-1})$, assuming it has already decoded the \emph{future} message pair $(w_{2c, b}, w_{2p, b})$ correctly, such that
\begin{align} \label{eq:TWRC:ULDL:DFRN:TypicalDecoder:BS}
&\displaystyle \{\textbf{u}_{2c}(\hat{w}'_{2c, b}), \textbf{x}_{2c} (\hat{w}'_{2c, b}, w_{2c, b}), \textbf{u}_{2p}(\hat{w}'_{2p, b}), \textbf{x}_{2p} (\hat{w}'_{2p, b}, w_{2p, b}), \notag \\
&\displaystyle \hspace{0.1in} \textbf{x}_{R} (w'_{0, b}, \hat{w}'_{2c, b}, \hat{w}'_{2p, b}), \textbf{u}_{0}(w'_{0, b}), \textbf{x}_0 (w'_{0, b}, w_{0, b}),  \textbf{y}_{0, b}\} \notag \\
&\hspace{0.6in}\displaystyle \in \mathcal{A}_{\epsilon}(U_0, X_0, U_{2c}, X_{2c}, U_{2p}, X_{2p}, X_R, Y_0).
\end{align}

The error analysis at the $\textrm{BS}$ corresponds to single-user decoding \cite{book:Cover}. From \eqref{eq:TWRC:ULDL:DFRN:TypicalDecoder:BS}, if the following rate constraints
\begin{align}
R_{2c} + R_{2p} &\leq \mathrm{I}\left(U_{2c}, U_{2p}, X_{2c}, X_{2p}, X_R ; Y_0 | U_0, X_0 \right) \label{eq:DFRN:ratesAtBS}
\end{align}	
and $n \rightarrow \infty $ are satisfied then the decoding error probability can be made small.

Similar to the decoding steps at the $\textrm{BS}$, the $\textrm{UE}_1$ uses \emph{backward} decoding. In block $b$, the $\textrm{UE}_1$ looks for the pair $(\hat{w}'_{0, b}, \hat{w}'_{2c, b}) = (\hat{w}_{0, b-1}, \hat{w}_{2c, b-1})$, assuming it has already decoded the previous message pair $(w_{0, b}, w_{2c, b})$ correctly, such that
\begin{align} \label{eq:TWRC:ULDL:DFRN:TypicalDecoder:UE1}
&\displaystyle \{\textbf{u}_{0}(\hat{w}'_{0,b}), \textbf{x}_{0}(\hat{w}'_{0, b}, w_{0, b}), \textbf{u}_{2c}(\hat{w}'_{2c, b}), \textbf{x}_{2c}(\hat{w}'_{2c, b}, w_{2c, b}), \notag \\
&\displaystyle \textbf{x}_{R}(w'_{0, b}, \hat{w}'_{2c, b}, \hat{w}'_{2p, b}), \textbf{y}_{1, b}\} \in \mathcal{A}_{\epsilon}(U_0, X_0, U_{2c}, X_{2c}, X_R, Y_1).
\end{align}

The error analysis at the $\textrm{UE}_1$ corresponds to two-user MAC decoding \cite{book:Cover} since it decodes the $\textrm{BS}$ message $w_{0} \in [1, 2^{n R_0}]$ and the common message $w_{2c} \in [1, 2^{n R_{2c}}]$ sent by $\textrm{UE}_2$ in order to alleviate interference effect. We note that $\textrm{UE}_1$ considers the codeword corresponding to message $w_{2p}$ as noise. From \eqref{eq:TWRC:ULDL:DFRN:TypicalDecoder:UE1}, if the following rate constraints
\begin{align}
\sum_{i \in \mathcal{S}} R_{i} &\leq \mathrm{I}\left( X(\mathcal{S}), X_R; Y_1 | U(\mathcal{S}^c), X(\mathcal{S}^c) \right)  \label{eq:DFRN:ratesAtUE1}
\end{align}	
for all $\mathcal{S} \subseteq \{0, 2c\}$ and $n \rightarrow \infty $ are satisfied, then the decoding error probability can be made small. The \emph{backward} decoding proceeds according to $b = B+1, B, \ldots, 2$ for both $\textrm{BS}$ and $\textrm{UE}_1$ assuming each has decoded the corresponding messages correctly in the block $b+1$.

By combining \eqref{eq:DFRN:ratesAtRN}, \eqref{eq:DFRN:ratesAtBS} and \eqref{eq:DFRN:ratesAtUE1}, and after applying the Fourier-Motzkin elimination to remove $R_{2p}$ by replacing it with $ R_{2p} = R_2 - R_{2c} \geq 0$ we get \eqref{eq:ITWRC:ULDL:DF:RN}. This concludes the proof.

\section{Proof of Theorem \ref{theorem:pDF+CF}} \label{app:pDF+CFproof}

Fix the input distribution $p(u_0, v_0, x_0, u_2, v_2, x_2, x_R)$ = $p(u_0) p(v_0 | u_0) p(x_0 | v_0)$ $p(u_2) p(v_2 | u_2) p(x_2 | v_2) p(x_R | u_0, u_2)$ and an $\epsilon > 0$. The messages $w_{0,b} = (w_{0,c,b}, w_{0,d,b}) \in [1, 2^{n R_{0}}]$ where $w_{0,c,b} \in [1, 2^{n R_{0,c}}]$ and $w_{0,d,b} \in [1, 2^{n R_{0,d}}]$ with $R_0 = R_{0,c} + R_{0,d}$; and $w_{2,b} = (w_{2c,b}, w_{2p,b}) \in [1, 2^{n R_{2}}]$ where $w_{2,c,b} \in [1, 2^{n R_{2,c}}]$ and $w_{2,d,b} \in [1, 2^{n R_{2,d}}]$ with $R_2 = R_{2,c} + R_{2,d}$, for $b = 1, 2, \ldots, B$, will be sent over the ITWRC in $B+1$ blocks each of $n$ transmissions. The messages are uniformly distributed and independent of each other. A random coding argument is used to show the achievability of $\mathcal{R}_{pDF+CF}$. In each of the $B+1$ blocks the same codebooks are used at each transmitter, i.e., regular encoding. As $B \rightarrow \infty$, $R_t B/(B+1) \rightarrow R_t, \; t \in \{\{0,c\}, \{0,d\}, \{2,c\}, \{2,d\}\}$.

\textbf{\textit{Codebook construction}:}
\begin{itemize}
	\item Generate $2^{n R_{0c}}$ i.i.d. codewords $\textbf{u}_0$ each with probability $p(\textbf{u}_0) = \prod_{i=1}^{n} p(u_{0,i})$.
	Label them $\textbf{u}_0 (w'_{0c})$, where $w'_{0c} \in [1, 2^{n R_{0c}}]$.
	\item For each $\textbf{u}_0(w'_{0c})$ generate $2^{n R_{0c}}$ i.i.d. sequences $\textbf{v}_0$ each with probability $p(\textbf{v}_0 | \textbf{u}_0 (w'_{0c})) = \prod_{i=1}^{n} p(v_{0,i} | u_{0,i}(w'_{0c}))$. Label them $\textbf{v}_0(w'_{0c}, w_{0c})$, where $w_{0c} \in [1, 2^{n R_{0c}}]$.
	\item For each $\textbf{v}_0(w'_{0c}, w_{0c})$ generate $2^{n R_{0d}}$ i.i.d. sequences $\textbf{x}_0$ each with probability $p(\textbf{x}_0 | \textbf{v}_0 (w'_{0c}, w_{0c})) = \prod_{i=1}^{n} p(x_{0,i} | v_{0,i}(w'_{0c}, w_{0c}))$. Label them $\textbf{x}_0(w'_{0c}, w_{0c}, w_{0d})$, where $w_{0d} \in [1, 2^{n R_{0d}}]$.
	\item Generate $2^{n R_{2c}}$ i.i.d. codewords $\textbf{u}_2 \in \mathbb{C}^n$ each with probability $p(\textbf{u}_2) = \prod_{i=1}^{n} p(u_{2,i})$.
	Label them $\textbf{u}_2 (w'_{2c})$, where $w'_{2c} \in [1, 2^{n R_{2c}}]$.
	\item For each $\textbf{u}_2(w'_{2c})$ generate $2^{n R_{2c}}$ i.i.d. sequences $\textbf{v}_2$ each with probability $p(\textbf{v}_2 | \textbf{u}_2 (w'_{2c})) = \prod_{i=1}^{n} p(v_{2,i} | u_{2,i}(w'_{2c}))$. Label them $\textbf{v}_2(w'_{2c}, w_{2c})$, where $w_{2c} \in [1, 2^{n R_{2c}}]$.
	\item For each $\textbf{v}_2(w'_{2c}, w_{2c})$ generate $2^{n R_{2d}}$ i.i.d. sequences $\textbf{x}_2$ each with probability $p(\textbf{x}_2 | \textbf{v}_2 (w'_{2c}, w_{2c})) = \prod_{i=1}^{n} p(x_{2,i} | v_{2,i}(w'_{2c}, w_{2c}))$. Label them $\textbf{x}_2(w'_{2c}, w_{2c}, w_{2d})$, where $w_{2d} \in [1, 2^{n R_{2d}}]$.
	\item For every $\{\textbf{u}_0(w'_{0c}), \textbf{u}_2(w'_{2c})\}$ generate $2^{n R_{3}}$ i.i.d. $\textbf{x}_R$ sequences each with probability $\prod_{i=1}^{n}$ $p(x_{R, i} | u_{0,i}(w'_{0c}), u_{2,i}(w'_{2c}))$. Label them $\textbf{x}_R(w'_{0c}, w'_{2c}, s)$, where $s \in [1, 2^{n R_{3}}]$.	
	\item For every $\{\textbf{x}_R(w'_{0c}, w'_{2c}, s), \textbf{v}_0(w'_{0c}, w_{0c}), \textbf{v}_2(w'_{2c}, w_{2c})\}$ generate $2^{n \hat{R}_{3}}$ i.i.d. $\hat{\textbf{y}}_r$ sequences each with probability $\prod_{i=1}^{n}$ $p(\hat{y}_{R, i} | x_{R,i}(w'_{0c}, w'_{2c}, s)$, $v_{0,i}(w'_{0c}, w_{0c}),$ $v_{2,i}(w'_{2c}, w_{2c}))$. Label them $\hat{\textbf{y}}_R(z, w'_{0c}, w'_{2c}, w_{0c}, w_{2c}, s)$, where $z \in [1, 2^{n \hat{R}_{3}}]$.	
\end{itemize}

\textbf{\textit{Random Partitions}:}
Randomly partition the set $\{1, \ldots, 2^{n \hat{R}_{3}}\}$ into $2^{n R_{3}}$ cells $\mathcal{S}_{3, s}$ and index them by $s$.

\textbf{\textit{Encoding}:}
We adopt a block Markov encoding scheme where in block $b, b = 1, \ldots, B$, the BS transmits the length-$n$ codeword $\textbf{x}_0(w_{0,c, b-1}, w_{0,c, b}, w_{0,d, b})$ in order to send $w_{0,c,b} \in [1, 2^{n R_{0,c}}]$ and $w_{0,d,b} \in [1, 2^{n R_{0,d}}]$; and similarly the $\mathrm{UE}_2$ transmits the length-$n$ codeword $\textbf{x}_2(w_{2,c, b-1}, w_{2,c, b}, w_{2,d, b})$ in order to send $w_{2,c,b} \in [1, 2^{n R_{2,c}}]$ and $w_{2,d,b} \in [1, 2^{n R_{2,d}}]$. The relay transmits the codeword $\textbf{x}_R(w_{0,c,b-1}, w_{2,c,b-1}, s_b)$ where $s_b \in [1, 2^{n R_{3}}]$ is the compressed message corresponding to the bin index that $z_{b-1}$ belongs to. Moreover, we assume that
\begin{align} 
\displaystyle &\{\hat{\textbf{y}}_R(z_{b}, w'_{0,c, b}, w'_{2,c, b}, w_{0,c,b}, w_{2,c,b}, s_{b}), \textbf{y}_R(b), \notag\\
& \hspace{0.1in} \displaystyle \textbf{v}_{0}(w'_{0,c, b}, w_{0,c,b}), \textbf{v}_{2}(w'_{2,c,b}, w_{2,c,b})), \textbf{x}_R(w'_{0,c,b}, w'_{2,c,b}, s_{b})  \notag
\end{align}
is $\epsilon$-typical and that $z_{b-1} \in \mathcal{S}_{3, s_{b}}$. We note that $w'_{0,c,b} = w_{0,c,b-1}$ and $w'_{2,c,b} = w_{2,c,b-1}$.

\textbf{\textit{Decoding at the $\textrm{RN}$}:}
For the decoding at the $\textrm{RN}$ in order to obtain \emph{cooperation} after each block $b$, $b = 1, 2, \ldots, B$, the $\textrm{RN}$ chooses
$(\hat{w}_{0,c,b}, \hat{w}_{2,c,b})$, assuming it has already correctly decoded the previous message pair $(w_{0,c,b-1}, w_{2,c, b-1})$, such that
{\footnotesize
\begin{align} \label{eq:TWRC:ULDL:TypicalDecoder:Relay}
&\displaystyle \{\textbf{u}_{0}(w'_{0,b}), \textbf{v}_0 (w'_{0,c,b}, \hat{w}_{0,c,b}), \textbf{u}_{2}(w'_{2,c,b}), \textbf{v}_{2} (w'_{2,c,b}, \hat{w}_{2,c,b}), \notag \\
&\hspace{0.2in} \displaystyle \textbf{x}_{R} (w'_{0,c,b}, w'_{2,c,b}, s_{b}), \textbf{y}_{R, b}\} \in \mathcal{A}_{\epsilon}(U_0, V_0, U_2, V_2, X_R, Y_R).
\end{align}}

The error analysis at the $\textrm{RN}$ corresponds to MAC with two users \cite{book:Cover}. From \eqref{eq:TWRC:ULDL:TypicalDecoder:Relay} if the following rate constraints
\begin{align}
\sum_{i \in \mathcal{S}} R_{i,c} &\leq \mathrm{I}\left( V(\mathcal{S}); Y_R | X_R, U_0, U_2, V(\mathcal{S}^c) \right)  \label{eq:ratesAtRN:pDF+CF}
\end{align}	
for all $\mathcal{S} \subseteq \{0, 2\}$ and $n \rightarrow \infty $ are satisfied then the decoding error probability can be made small. Also, the relay after receiving $y_{R, b}$ decides that $z_{b}$ is received if $\{\hat{\textbf{y}}_R(z_{b}, w'_{0,c, b}, w'_{2,c, b}, w_{0,c,b}, w_{2,c,b}, s_{b}), \textbf{y}_{R,b}, \textbf{v}_{0}(w'_{0,c, b}, w_{0,c,b})$, $\textbf{v}_{2}(w'_{2,c,b}, w_{2,c,b}), \textbf{x}_R(w'_{0,c,b}, w'_{2,c,b}, s_{b})\}$ is jointly $\epsilon$-typical. From rate-distortion theory, there will
exist such a $z_b$ with high probability if $n$ is sufficiently large and
\begin{align}   \label{eq:R3hat}
\hat{R}_{3} &\geq \mathrm{I}\left(\hat{Y}_R ; Y_R | X_R, V_0, V_2 \right).
\end{align}

\textbf{\textit{Decoding at the $\textrm{BS}$ and $\textrm{UE}_1$}:}
For the decoding process at both the $\textrm{BS}$ and $\textrm{UE}_1$ a \emph{backward decoding} scheme is used which is then followed by \emph{sliding-window} decoding scheme. With the backward decoding each destination node decodes the message part sent cooperatively by its source and the RN; and with the sliding-window decoding technique (after peeled out the decoded parts) each destination node, by using the compressed version of the relay received signal, decodes the remained part of the transmitted message.

We start with the decoding process at the $\textrm{BS}$. First, we note that the $\textrm{BS}$ does not suffer from interference since it can pre-cancel its own transmitted signal before starting the decoding, i.e., it has side information of its own transmitted signal.

Starting with backward decoding, in block $b, b=B+1, B, \ldots, 2$, the $\textrm{BS}$ looks for the massage $\hat{\hat{w}}'_{2,c,b} = \hat{\hat{w}}_{2,c,b-1}$, assuming it has already decoded the future message $w_{2,c,b}$ correctly, such that
\begin{align} \label{eq:TWRC:ULDL:TypicalDecoder:BS}
&\displaystyle \{\textbf{u}_{2}(\hat{\hat{w}}'_{2,c,b}), \textbf{v}_{2}(\hat{\hat{w}}'_{2,c,b}, w_{2,c,b}), \textbf{y}_{0, b} \;|\; \textbf{u}_{0}(w'_{0,c,b}), \textbf{v}_{0}(w'_{0,c,b}, w_{0,c,b}), \notag \\
&\hspace{0.1in} \displaystyle \textbf{x}_0 (w'_{0,c,b}, w_{0,c,b}, w_{0,d,b})\} \in \mathcal{A}_{\epsilon}(U_2, V_2, Y_0, U_0, V_0, X_0).
\end{align}

From \eqref{eq:TWRC:ULDL:TypicalDecoder:BS}, if the following rate constraints
\begin{align}   \label{eq:ratesAtBS:pDF+CF}
R_{2,c} &\leq \mathrm{I}\left(U_2, V_2 ; Y_0 | U_0, V_0, X_0 \right) = \mathrm{I}\left(U_2, V_2 ; Y_0 | X_0 \right)
\end{align}	
and $n \rightarrow \infty $ are satisfied then the decoding error probability can be made small \cite{book:Cover}.

Similar to the decoding steps at the $\textrm{BS}$, the $\textrm{UE}_1$ performs decoding in time backward direction for $b, b=B+1, B, \ldots, 2$. However, unlike the $\textrm{BS}$, the $\textrm{UE}_1$ sees interference from the $\textrm{UE}_2$ and in order to alleviate the interference effect, it decodes some part of the interfering signal. As such, the $\textrm{UE}_1$ looks for the massage pair $(\check{w}'_{0,c,b}, \check{w}'_{2,c,b})= (\check{w}_{0,c,b-1}, \check{w}_{2,c,b-1})$, assuming it has already decoded the future message pair $(w_{0,c,b}, w_{2,c,b})$ correctly, such that
\begin{align} \label{eq:TWRC:ULDL:TypicalDecoder:UE1}
&\displaystyle \{\textbf{u}_{0}(\check{w}'_{0,c,b}), \textbf{v}_{0}(\check{w}'_{0,c,b}, w_{0,c,b}), \textbf{u}_{2}(\check{w}'_{2,c,b}), \textbf{v}_{2}(\check{w}'_{2,c,b}, w_{2,c,b}), \notag \\
&\hspace{1in} \displaystyle \textbf{y}_{1, b}\} \in \mathcal{A}_{\epsilon}(U_0, X_0, U_2, X_2, Y_1).
\end{align}
The error analysis at the $\textrm{UE}_1$ corresponds to two-user MAC joint decoding \cite{book:Cover} where from \eqref{eq:TWRC:ULDL:TypicalDecoder:UE1} if the following rate constraints
\begin{align}
\sum_{i \in \mathcal{S}} R_{i,c} &\leq \mathrm{I}\left( U(\mathcal{S}), V(\mathcal{S}); Y_1 | U(\mathcal{S}^c), V(\mathcal{S}^c) \right) \label{eq:ratesAtUE1:pDF+CF}
\end{align}	
for all $\mathcal{S} \subseteq \{0, 2\}$ and $n \rightarrow \infty $ are satisfied then the decoding error probability can be made small \cite{book:Cover}. The \emph{backward} decoding proceeds according to $b = B+1, B, \ldots, 2$ for both $\textrm{BS}$ and $\textrm{UE}_1$ assuming each has decoded the corresponding messages correctly in the block $b+1$.

After decoding all the respective messages $w_{0,c,b}, w_{2,c,b}$, for $b=B+1, B, \ldots, 2$, at the destination nodes ($\textrm{BS}$ and $\textrm{UE}_1$), they proceed with sliding-window decoding technique to decode the relay partition bin indexes $z_{b}$ and the respective massages $w_{2,d}$ and $w_{0,d}$ by using blocks $b$ and $b+1$, for $b=1, 2, \ldots, B$. At the block $b+1$, the $\textrm{BS}$ and $\textrm{UE}_1$ first decode the message $s_{b+1}$ sent by the relay using the following respective typicality checks
{\footnotesize
\begin{align} \label{eq:TWRC:ULDL:typical:BS:index}
&\Big\{\textbf{x}_{R}(w'_{0,c,b+1}, w'_{2,c,b+1}, s_{b+1}), \textbf{y}_{0, b+1} \;|\; \textbf{u}_{2}(w'_{2,c,b+1}),  \notag \\
&\textbf{v}_{2}(w'_{2,c,b+1}, w_{2,c,b+1}), \textbf{u}_{0}(w'_{0,c,b+1}), \textbf{v}_{0}(w'_{0,c,b+1}, w_{0,c,b+1}),  \notag \\
&\textbf{x}_0 (w'_{0,c,b+1}, w_{0,c,b+1}, w_{0,d,b+1}) \Big\} \in \mathcal{A}_{\epsilon}(X_R, Y_0, U_2, V_2, U_0, V_0, X_0)
\end{align}}
and
{\footnotesize
\begin{align} \label{eq:TWRC:ULDL:typical:UE1:index}
&\Big\{\textbf{x}_{R}(w'_{0,c,b+1}, w'_{2,c,b+1}, s_{b+1}), \textbf{y}_{1, b+1} \;|\; \textbf{u}_{0}(w'_{0,c,b+1}), \notag \\
&\textbf{v}_{0}(w'_{0,c,b+1}, w_{0,c,b+1}), \textbf{u}_{2}(w'_{2,c,b+1}), \textbf{v}_{2}(w'_{2,c,b+1}, w_{2,c,b+1}) \Big\} \notag \\
&\hspace{1.2in} \displaystyle \in \mathcal{A}_{\epsilon}(X_R, Y_1, U_0, V_0, U_2, V_2).
\end{align}}
From \eqref{eq:TWRC:ULDL:typical:BS:index} and \eqref{eq:TWRC:ULDL:typical:UE1:index}, if the following rate constraints
\begin{align}   \label{eq:R3:CompressedMessage}
R_{3} &\leq \min \Big\{\mathrm{I}\left(X_R ; Y_0 | U_0, U_2, V_0, V_2, X_0 \right), \notag \\
      & \hspace{1.2in} \mathrm{I}\left(X_R ; Y_1 | U_0, U_2, V_0, V_2 \right) \Big\}
\end{align}	
and $n \rightarrow \infty $ are satisfied, then the decoding error probability can be made small \cite{book:Cover}. Then, each receiver calculates a set $L_{k}(\textbf{y}_k (b))$, for $k = 0, 2$, of $z$ such that $z \in L_{k}(\textbf{y}_{k,b})$ if
\begin{align}
&\{\hat{\textbf{y}}_R(z_{b}, w'_{0,c,b}, w'_{2,c,b}, w_{0,c,b}, w_{2,c,b}, s_{b}), \textbf{y}_{k, b},  \notag \\
&\hspace{0.2in}\textbf{v}_{0}(w'_{0,c, b}, w_{0,c,b}), \textbf{v}_{2}(w'_{2,c,b}, w_{2,c,b})), \textbf{x}_R(w'_{0,c,b}, w'_{2,c,b}, s_{b}) \notag \\
&\hspace{1.2in} \displaystyle \in \mathcal{A}_{\epsilon}(U_0, U_2, V_0, V_2, X_R, \hat{Y}_R, Y_{k}). \notag
\end{align}
Both the $\textrm{BS}$ and $\textrm{UE}_1$ declare that $\hat{z}_b$ was sent in block $b$ if
\begin{align}
\hat{z}_b \in \mathcal{S}_{3, s_{b+1}} \cap L_{0}(\textbf{y}_{0,b}) \notag \\
\hat{z}_b \in \mathcal{S}_{3, s_{b+1}} \cap L_{2}(\textbf{y}_{2,b}).
\end{align}
With arbitrarily small probability of error, we could have $\hat{z}_b = z_b$ if $n \rightarrow \infty $ and
\begin{subequations} \label{eq:ratesAtUE1_a}
\begin{align}
\hat{R}_3 + \epsilon &\leq \mathrm{I}\left(\hat{Y}_R ; Y_0 | V_0, V_2, X_R, X_0 \right) + R_3 - \epsilon, \\
\hat{R}_3 + \epsilon &\leq \mathrm{I}\left(\hat{Y}_R ; Y_1 | V_0, V_2, X_R \right) + R_3 - \epsilon.
\end{align}
\end{subequations}

From \eqref{eq:R3hat}, if we select $\hat{R}_3 = \mathrm{I}\left(\hat{Y}_R ; Y_R | X_R, V_0, V_2 \right) + \epsilon$, then \eqref{eq:ratesAtUE1_a} can be expressed as
{\footnotesize
\begin{align}   \label{eq:ratesAtUE1_c}
&\max \left\{\mathrm{I}\left(\hat{Y}_R ; Y_R | Y_0, X_0, X_R, V_0, V_2 \right), \mathrm{I} \left(\hat{Y}_R ; Y_R | Y_1, X_R, V_0, V_2\right) \right\} \notag \\
&\leq  \min \left\{\mathrm{I}\left(X_R ; Y_0 | U_0, U_2, V_0, V_2, X_0 \right), \mathrm{I}\left(X_R ; Y_1 | U_0, U_2, V_0, V_2 \right) \right\}.
\end{align}}	

Finally, the $\textrm{BS}$ uses both $\textbf{y}_{0}(b)$ and $\hat{\textbf{y}}_{R}(z_{b}, w'_{0,c,b}, w'_{2,c,b}, w_{0,c,b}, w_{2,c,b}, s_{b})$ to find an index $w_{2,d,b}$ such that
{\footnotesize
\begin{align*}
&\Big\{\textbf{u}_{0}(w'_{0,c,b}), \textbf{v}_{0}(w'_{0,c,b}, w_{0,c,b}), \textbf{x}_{0}(w'_{0,c,b}, w_{0,c,b}, w_{0,d,b}),  \notag \\
&\hspace{0.2in} \textbf{u}_{2}(w'_{2,c,b}), \textbf{v}_{2}(w'_{2,c,b}, w_{2,c,b}), \textbf{x}_{2}(w_{2,d,b}), \textbf{x}_{R}(w'_{0,c,b}, w'_{2,c,b}, s_{b}), \notag \\
&\hspace{0.3in} \hat{\textbf{y}}_{R}(z_{b}, w'_{0,c,b}, w'_{2,c,b}, w_{0,c,b}, w_{2,c,b}, s_{b}), \textbf{y}_{0}(b)\Big\}  \notag \\
&\hspace{1.1in} \in \mathcal{A}_{\epsilon}(U_0, U_2, V_0, V_2, X_0, X_2, X_R, \hat{Y}_R, Y_{0}).
\end{align*}}

Similarly, the $\textrm{UE}_1$ uses both $\textbf{y}_{1}(b)$ and $\hat{\textbf{y}}_{R}(z_{b}, w'_{0,c,b}, w'_{2,c,b}, w_{0,c,b}, w_{2,c,b}, s_{b})$ to find an index $w_{0,d,b}$ regarding $\epsilon$-typicality. 

Both receivers succeed with high probability if
\begin{subequations} \label{eq:rates4DirectTx:pDF+CF}
\begin{align}
R_{0,d} &\leq I(X_0; Y_1, \hat{Y}_R \;|\; U_0, U_2, V_0, V_2, X_R)       \\
R_{2,d} &\leq I(X_2; Y_0, \hat{Y}_R \;|\; U_0, U_2, V_0, V_2, X_0, X_R)
\end{align}
\end{subequations}
and $n \rightarrow \infty$. By combining \eqref{eq:ratesAtRN:pDF+CF}, \eqref{eq:ratesAtBS:pDF+CF}, \eqref{eq:ratesAtUE1:pDF+CF} and \eqref{eq:rates4DirectTx:pDF+CF} we get \eqref{eq:ITWRC:ULDL:pDF+CF}. This concludes the proof.

\end{appendices}

\section*{Acknowledgment}
This work was funded in part by the Mitsubishi Electric R\&D Center Europe (MERCE) and in part by the European Commission's 7th framework programme under grant agreement FP7-248993 also referred to as LOLA. 

\bibliographystyle{IEEEtran}
\bibliography{../erhan_biblio}

\end{document}